\documentclass[11pt]{article}
\usepackage{cite}
\usepackage{amsmath,amssymb,amsfonts}
\usepackage{amsthm}
\usepackage{algorithmic}
\usepackage{graphicx}
\usepackage{textcomp}
\def\BibTeX{{\rm B\kern-.05em{\sc i\kern-.025em b}\kern-.08em
    T\kern-.1667em\lower.7ex\hbox{E}\kern-.125emX}}

\usepackage{tikz}
\usetikzlibrary{calc,patterns,angles,quotes}
\usepackage{cases}
\usepackage{float}
\usepackage{subcaption}
\usepackage{xcolor}

\definecolor{color_selva}{RGB}{34,139,34}
\definecolor{lightgray}{gray}{0.9}

\newtheorem{thm}{Theorem}
\newtheorem{coro}{Corollary}

\DeclareMathOperator{\sgn}{sgn}

\providecommand{\keywords}[1]
{
  \small	
  \textbf{\textit{Keywords---}} #1
}

\begin{document}
\title{A Note on the Pure Nash Equilibria for Evolutionary Games on Networks}
\author{Jean Carlo Moraes 
\thanks{The Author is with the Federal University of Rio Grande do Sul, Porto Alegre, RS Brazil (e-mail: jean.moraes@ufrgs.br). }
\thanks{This work was supported in part by the CAPES/Brazil under Grant CAPES-PRINT7608979, project 88881.309849/2018-1. }
}

\maketitle
\renewcommand{\figurename}{Fig.}
\begin{abstract}
Recently, a new model extending the standard replicator equation to a finite set of players connected on an arbitrary graph was developed in evolutionary game dynamics. The players are interpreted as subpopulations of multipopulations dynamical game and represented as vertices of the graph, and an edge constitutes the relation among the subpopulations. At each instant, members of connected vertices of the graph play a 2-player game and collect a payoff that determines if the chosen strategies will vanish or flourish. The model describes the game dynamics of a finite set of players connected by a graph emulating the replicator dynamics. It was proved a relation between the stability of the mixed equilibrium with the topology of the network. More specifically, the eigenvalues of the Jacobian matrix of the system evaluated at the mixed steady state are the eigenvalues of the graph's adjacency matrix multiplied by a scalar. This paper studies the pure (strict) Nash equilibria of these games and how it connects to the network. We present necessary and sufficient conditions for a pure steady-state in coordination or anti-coordination game to be a (strict) Nash Equilibrium.
\end{abstract}

\keywords{Complex networks, evolutionay game theory, pure nash equilibrium, replicator equation on networked populations}

\section{Introduction}
\label{sec:introduction}
In order to explain the existence of persistent behaviours that animals show in conflict situations, the evolutionary game theory emerged in the early 70s. Until then, it was believed that this type of behaviour emerged because it was beneficial to the species, but this idea goes against the Darwinian thought, where selection occurs at the individual level, not at the community level. John Maynard Smith and George Price proposed a solution to this problem by introducing a type of game that does not require players to be rational, \cite{SmPr73}. Evolutionary Game Theory only requires individuals to have a strategy and nature to have a way to measure the fitness of this strategy against all others. This strategy is tested through a game, and the payment is its perpetuation (throughout reproduction, descendants will inherit your strategy). The success of a strategy depends on its performance against opposing strategies and on how often these strategies are found in the population. In the decade that followed the publication of the work of Price and Smith, there was a continuous growth in the applications of evolutionary games within biology, \cite{smith_1982}. 

During this same period, Taylor and Jonker modeled the dynamics of the evolutionary games in the continuous case, with a system of non-linear differential equations, called replicator equations, and in the discrete case, with a system of non-linear difference equations, \cite{TAJO78}. The replication equation was used to describe not only biological phenomena, such as mutation and viral spread \cite{weibull_1985} but also socioeconomic dynamics, as cooperation  \cite{weibull_1985} and dissemination of knowledge and wealth in the training of social networks \cite{MaVi07},\cite{JaZe15}. In 2015, Madeo and Mocenni introduced a model where they integrate the multipopulational replicator equation  \cite{weibull_1985} with finite play structure over a graph \cite{OhNo06} \cite{JaZe15}, \cite{MaWeFu14}. They proved three interesting results: the model presented is an extension of the replicator equation; for games with two strategies, the strategy profile is a (strict) Nash equilibrium of the game and if and only if is an (asymptotically) stable steady states of the system and, the most surprising, when all players have the same payoff matrices then the coordinates of the internal equilibrium is independent of the topology of the network, though the linear stability depends on the eigenvalues of the adjacency matrix of the network. 
They were able to connect the stability of the internal equilibrium with the network's topology, and therefore give conditions to have a strict Nash Equilibrium internal to the simplex based on the network. It was not presented any result of this kind made for the pure equilibria in general.

 In this paper, we present conditions on the network for coordination and anti-coordination games to (strict) Nash Equilibrium, even in the landscape where all players may have different payoff matrices. Our results provide many necessary (and sufficient) conditions for pure equilibria to be a (strict) Nash Equilibrium. It allows us, in many cases, to reduce considerably the number of cases to analyse. We will provide an example of a game with 262,144 pure equilibria. Using a combination of the results in the paper, we can conclude that only 16 can be a strict Nash equilibrium (8 are).

\section{Evolutionary Game on Networks (EGN)}
Consider a finite population of players $V=\{1,2,...,N\}$ connected in a network represented by an undirected graph $G$ with adjacency matrix $A=\{a_{v,w}\}$, where $a_{v,w}=1$ if player $v$ and $w$ are connected and  $a_{v,w}=0$ otherwise. The model developed in \cite{MaMo15} consider the interactions as games played by $N$ individuals as two-person games where an individual of a sub population $v$ plays against a represent of one of the sub populations $w$ connected to $v$, for simplicity we will say only that $v$ plays against $v$. We will consider a game with two pure strategies: cooperate (C) and defect (D). Let the strategy played by player $v$ be denoted by $x_v$, where $x_v = 1$ if $v$ cooperates and $x_v=0$ if $v$ defects. The payoff matrix for a player $v$ is given by 
 $$B_{v}=\begin{bmatrix}
b_{v,C,C} & b_{v,C,D} \\
b_{v,D,C} & b_{v,D,D}
\end{bmatrix}.$$
Player $v$ can also play a mixed strategy, in this case, $v$ cooperates with relative frequency $x_v$. Let ${\bf x}=(x_1,x_2,...,x_N)$ be the strategy profile in the game. Since $x_v \in \Delta=[0,1]$ for all $v \in V$, the strategy set $\mathcal{S}$ is $ \Delta^N=[0,1]^N$.
The payoff function of the two-person game player for player $v$ against a neighbour $w$ is given by
\begin{align*}
\phi(x_v,x_w)=&\begin{bmatrix}
x_v & (1-x_v) 
\end{bmatrix}B_v\begin{bmatrix}
x_w  \\
1-x_w
\end{bmatrix}=\\
=&(b_{v,C,C}-b_{v,D,C}+b_{v,D,D}-b_{v,C,D})x_v x_w -\\ -(b_{v,D,D}&-b_{v,C,D})x_v+b_{v,D,C}x_w+b_{v,D,D}(1-x_w)
\end{align*}
The payoff of player $v$ over the network is the sum of all the outcomes of the two-players games played by $v$ against his neighbours:
\begin{equation}
\phi_v({\bf x})=\sum_{w=1}^{N} a_{v,w} \phi(x_v,x_w).
\end{equation}
In the $(N,2)$-game case ($N$ players, $2$ strategies), the replicator equation on graphs, which models the dynamics of an evolutionary game on networked (EGN) populations, as defined in \cite{MaMo15} by the system of $N$ differential equation given by
\begin{equation}
\dot{x}_v= x_y(1-x_v)f_v({\bf x}), \text{for} \, v \in V, \label{eq:prin_red}
\end{equation}
where $\displaystyle{f_v({\bf x})=\frac{\partial \phi_v}{\partial x_v}({\bf x}).}$
If we denote $\sigma_{v,C}=b_{v,C,C}-b_{v,D,C}$ and $\sigma_{v,D}=b_{v,D,D}-b_{v,C,D}$, then
\begin{align*}
f_v({\bf x})&=\sigma_{v,C}\sum_{w=1}^{N} a_{v,w}x_w - \sigma_{v,D}\sum_{w=1}^{N} a_{v,w}(1-x_w)=\\
&=(\sigma_{v,C}+\sigma_{v,D}) \sum_{w=1}^{N} a_{v,w}x_w - \sigma_{v,D}\sum_{w=1}^{N} a_{v,w}.
\end{align*}

Note that $\sum_{w=1}^{N} a_{v,w}$ is the degree of the vertex  $v$. Let us denote it by $d_v$, then equations  \eqref{eq:prin_red} can be rewritten as 
\begin{equation}
\dot{x}_v= x_y(1-x_v)\bigg[(\sigma_{v,C}+\sigma_{v,D}) \sum_{w=1}^{N} a_{v,w}x_w - \sigma_{v,D}d_v\bigg].
\end{equation}

Given $ v \in V$ and ${\bf x} \in \Delta^N$ we shall use the notation $(x_v,{\bf x_{-v}})$ in order to define Nash Equilibria. The vector ${\bf x_{-v}}$ is a vector in $\Delta^{N-1}$ which corresponds to the vector ${\bf x}$ without the component $v$, ${\bf x_{-v}}=(x_1,x_2,...,x_{v-1},x_{v+1},...,x_N)$. Therefore, for all $ v \in V$ we define $\pi_v:\Delta \times \Delta^{N-1} \rightarrow \mathbb{R}_{+}$ as
\begin{equation}
\pi_v(x_v,{\bf x_{-v}})=\phi_v({\bf x}).
\end{equation}

A Nash equilibrium NE is a profile of strategies where no player can improve its payoff unilaterally, i.e., for all players $v$, $\pi_v$ cannot be increased, changing only the strategy $x_v$. 
\begin{align*}
\Theta^{NE}=\{{\bf x} \in \Delta^N : \forall  v  \, \phi_v(x_v, {\bf x_{-v}}) \geq \phi_v(y, {\bf x_{-v}});\, \forall \, y \in \Delta\}.
\end{align*}
If for all players $v$ $x_v$ is the unique best response to others players' strategies in that equilibrium ${\bf x_{-v}}$ then ${\bf x}$ is a strict Nash equilibrium SNE.
\begin{align*}
\Theta^{SNE}=\{{\bf x} \in \Delta^N : \forall  v \;  \phi_v(x_v, {\bf x_{-v}}) > \phi_v(y, {\bf x_{-v}});\, \forall y \in \Delta\}.
\end{align*}

Following \cite{MaMo15}, for the EGN model given by \eqref{eq:prin_red} these sets can be rewritten as:

\begin{align*}
\Theta^{NE}=\bigg\{{\bf x} \in & \Delta^N : \forall \, v \,\, \Big( (x_v=0 \land f_v({\bf x}) \leq 0) \,\, \lor \\
&(x_v=1 \land f_v({\bf x}) \geq 0)\,\,  \lor \,\,(f_v({\bf x}) = 0) \Big) \bigg\}.
\end{align*}
and
\begin{align*}
\Theta^{SNE}=\bigg\{{\bf x} \in & \Delta^N : \forall \, v \,\, \Big( (x_v=0 \land f_v({\bf x}) < 0) \,\, \lor \\
&(x_v=1 \land f_v({\bf x}) > 0)\,\,  \lor \,\,(f_v({\bf x}) = 0) \Big) \bigg\}.
\end{align*}
The set of steady states, $\Theta^{*}$ of  the system of ODEs\eqref{eq:prin_red} is the set that contains all points in $\Delta^N$ such that $x_v(1-x_v)f_v({\bf x})=0$ for all $v$ in $V$, therefore:
\begin{align*}
\Theta^{*}=\bigg\{{\bf x} \in & \Delta^N : \forall \, v \,\,  \big(x_v=0 \, \lor x_v=1 \,  \lor \,f_v({\bf x}) = 0 \big) \bigg\}.
\end{align*}

Looking to the definition $\Theta^*$ it is clear that it contains the set $\Theta^p=\{0,1\}^N$. The elements in $\Theta^p$ are called pure steady states. Among all points in $\Theta^p$, we denote two special ones where we have full cooperation, $\mathbf{x_{FC}}=(1,1,...,1)$, and full defection, $\mathbf{x_{FD}}=(0,0,...,0)$. We can partition the steady states set in three  subsets: the pure $(\Theta^p)$, the mixed $(\Theta^m)$ and the pure/mixed $(\Theta^{mp})$. Thus, $\Theta^* = \Theta^p \cup \Theta^m \cup \Theta^{mp}$. The set $\Theta^m$ contains all steady states in the interior of the simplex $\Delta^N$, so $\Theta^m= (0,1)^N \cap \Theta^*$. Note that if ${\bf x} \in \Theta^m$ then ${\bf x} \in \Theta^*$ which implies that $f_v({\bf x})=0$ for all $v$ and therefore ${\bf x}$ is a Nash equilibrium, $\Theta^m \subset \Theta^{NE}$. All the other steady states are classified as pure/mixed.

One of the main results in \cite{MaMo15} establish that if $\sigma_{v,C}$ and $\sigma_{v,D}$ are constant and not zero for all $v$ then there exists $\mathbf{x^*} \in \Theta^m$. Moreover, if the adjacency matrix $A$ of the graph is invertible, then the values of  ${\bf x^*}$ do not depend on $A$, but the linear stability depends on the eigenvalues of $A$. In this case, the topology plays no role in the location of the steady-state, but it does on its stability. Other result in \cite{MaMo15} establishes conditions for a pure steady-state to be in $\Theta^{NE}$ (and also in $\Theta^{NES})$. These conditions depend on the mathematical expression that derives from equation \eqref{eq:prin_red}. Unlike the result for mixed steady state, in \cite{MaMo15} is not provided conditions on the network that ensures that a point in $\Theta^p$ is a NE or SNE. Results for two special pure equilibrium, $\mathbf{x_{FC}}$ and $\mathbf{x_{FD}}$, are obtained in \cite{MaMo21}.  Such a connection for a broader theory may not be possible since the size of $\Theta^p$ can be very large and increases exponentially when the number of vertices increases. However, in this paper, we provide results for this question depending on the ratio of neighbours that plays cooperate and defect, which depends on the network's topology. This result allows us, in many situations, to reduce the number of possible (strict) Nash equilibria. For some network structures, the results allow us to find among the elements $\Theta^p$ all the pure (strict) Nash equilibrium.

\section{Results for Pure Nash Equilibria}

In order to analyse the pure steady states of \eqref{eq:prin_red} we shall need the Jacobian matrix associated with it,

\begin{equation}
J_{v,w}(\mathbf{x^*}) = \begin{cases} (1-2x^*_{v})k_v({\mathbf{x^*}}) &\mbox{if } w=v \\ 
x^*_v(1-x^*_v)(\sigma_{v,C}+\sigma_{v,D})a_{v,w} & \mbox{if } w \neq v   \end{cases} \label{Jaco}
\end{equation}
where 
\begin{equation}
k_v({\mathbf{x^*}})=\sum_{w=1}^{N} a_{v,w} \bigg( \sigma_{v,C}x^*_w - \sigma_{v,D}(1-x^*_w) \bigg).
\end{equation}

For ${\mathbf x^*} \in \Theta^p$, we define
\begin{equation}
N_{v,C}:=\sum_{u=1}^{N} a_{v,u}x^*_u \quad \text{and} \quad
N_{v,D}:=\sum_{u=1}^{N} a_{v,u}(1-x^*_u)
\end{equation}

Note that $N_{v,C}$ is the amount of neighbours of $v$ that cooperates and $N_{v,D}$ is the amount of neighbours of $v$ that defects. Also important to note that $N_{v,C}+N_{v,D}=d_v$. With these definitions we can state our main result which will connect the linear stability of a pure steady state (and the Nash equilibrium) with the ratio $\displaystyle{R_v:=\sigma_{v,C}(\sigma_{v,D}})^{-1}$ and the quantities $N_{v,C}$ and $N_{v,D}$.

\begin{thm}\label{theom}
Let $\mathbf{x^*} \in \Theta^p$, $\mathbf{x^*}$ is a steady state of the EGN defined \eqref{eq:prin_red}. Then $\mathbf{x^*}$ is a NE iff for each vertex $v$
\begin{itemize}
\item that has a coordination payoff matrix, we have
\begin{numcases}{}
$$N_{v,D} \leq R_v N_{v,C}$$ & if $x^*_v=1$,  $v$  plays cooperate;
 \label{BI_CO_NA}  \\ 
$$N_{v,C} \leq \frac{1}{R_v}N_{v,D}$$ & if   $x^*_v=0$, $v$ plays defect; \label{BI_DE_NA}
\end{numcases}
     
\item that has a anti-coordination payoff matrix, we have that
\begin{numcases}{}
$$N_{v,C} \leq \frac{1}{R_v}N_{v,D}$$  & if $x^*_v=1$, $v$  plays cooperate; \label{CO_CO_NA}  \\ 
$$N_{v,D} \leq R_v N_{v,C}$$ & if $x^*_v=0$, $v$ plays defect. \label{CO_DE_NA} 
\end{numcases}
\end{itemize} 

Moreover, $\mathbf{x^*}$ is a SNE iff for each vertex $v$
\begin{itemize}
\item that has a coordination payoff matrix, we have
\begin{numcases}{}
$$N_{v,D} < R_v N_{v,C}$$ & if $x^*_v=1$,  $v$  plays cooperate;
 \label{BI_CO}  \\ 
$$N_{v,C} < \frac{1}{R_v}N_{v,D}$$ & if   $x^*_v=0$, $v$ plays defect; \label{BI_DE}
\end{numcases}
     
\item that has a anti-coordination payoff matrix, we have that
\begin{numcases}{}
$$N_{v,C} < \frac{1}{R_v}N_{v,D}$$  & if $x^*_v=1$, $v$  plays cooperate; \label{CO_CO}  \\ 
$$N_{v,D} < R_v N_{v,C}$$ & if $x^*_v=0$, $v$ plays defect. \label{CO_DE} 
\end{numcases}
\end{itemize} 

\end{thm}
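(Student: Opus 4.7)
The plan is to reduce the theorem to a direct evaluation of $f_v(\mathbf{x}^*)$ at a pure steady state, followed by an application of the $f_v$-based descriptions of $\Theta^{NE}$ and $\Theta^{SNE}$ recalled just above the statement. First I would observe that for any $\mathbf{x}^* \in \Theta^p$ we have $\sum_w a_{v,w} x_w^* = N_{v,C}$ and $d_v = N_{v,C} + N_{v,D}$, so the identity $f_v(\mathbf{x}) = (\sigma_{v,C}+\sigma_{v,D}) \sum_w a_{v,w} x_w - \sigma_{v,D} d_v$ collapses to $f_v(\mathbf{x}^*) = \sigma_{v,C} N_{v,C} - \sigma_{v,D} N_{v,D}$, which is exactly $k_v(\mathbf{x}^*)$.

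Second, since $x_v^* \in \{0,1\}$ for a pure state, the clause $f_v(\mathbf{x}^*) = 0$ in the description of $\Theta^{NE}$ is subsumed by the two inequality clauses. Hence $\mathbf{x}^* \in \Theta^{NE}$ iff, at every vertex $v$, $x_v^*=1$ forces $\sigma_{v,C} N_{v,C} \geq \sigma_{v,D} N_{v,D}$, while $x_v^*=0$ forces $\sigma_{v,C} N_{v,C} \leq \sigma_{v,D} N_{v,D}$.

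Third, I would split by game type. A coordination payoff matrix at $v$ means $\sigma_{v,C},\sigma_{v,D}>0$, while an anti-coordination matrix means $\sigma_{v,C},\sigma_{v,D}<0$; in both cases $R_v>0$, so it is legitimate to express the inequalities in the form $N_{v,D} \lessgtr R_v N_{v,C}$. Dividing each of the two inequalities above by $\sigma_{v,D}$ preserves the direction in the coordination case and reverses it in the anti-coordination case, producing exactly \eqref{BI_CO_NA}--\eqref{CO_DE_NA}. The SNE half of the proof is identical after replacing each weak inequality with its strict counterpart: a pure $x_v^*$ with $f_v(\mathbf{x}^*)=0$ would make $\pi_v$ constant in its first argument, so $x_v^*$ would not be the unique best response and such a borderline case must be excluded from $\Theta^{SNE}$; this recovers \eqref{BI_CO}--\eqref{CO_DE}.

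The main (mild) obstacle is the sign-tracking bookkeeping across the four subcases and the associated explanation of why dividing by $\sigma_{v,D}$ swaps the roles of $N_{v,C}$ and $N_{v,D}$ in going from coordination to anti-coordination. Once the identity $f_v(\mathbf{x}^*) = \sigma_{v,C} N_{v,C} - \sigma_{v,D} N_{v,D}$ is secured, everything else is routine algebra built on the characterizations of $\Theta^{NE}$ and $\Theta^{SNE}$ already granted by the excerpt.
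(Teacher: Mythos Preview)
Your argument is correct. You reach the same endpoint as the paper by the same sign-tracking on the identity $f_v(\mathbf{x}^*)=\sigma_{v,C}N_{v,C}-\sigma_{v,D}N_{v,D}$, but you frame the reduction differently: the paper computes the Jacobian \eqref{Jaco} at a pure state, observes that it is diagonal with entries $\lambda_v=(1-2x_v^*)k_v(\mathbf{x}^*)$, and then invokes Theorem~2 of \cite{MaMo15} (SNE/NE $\Leftrightarrow$ all $\lambda_v<0$ / $\lambda_v\le 0$). You instead plug directly into the $f_v$-based descriptions of $\Theta^{NE}$ and $\Theta^{SNE}$ stated just before the theorem. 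The two routes are equivalent at pure states since $\lambda_v=(1-2x_v^*)f_v(\mathbf{x}^*)$, so the Jacobian sign condition and the best-response condition coincide vertex by vertex; your path is marginally more economical because it skips the Jacobian computation, while the paper's path makes explicit the link to linear stability that is exploited later in the corollaries.
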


\begin{proof}
Given $\mathbf{x^*} \in \Theta^p$, for all $v$ $x_v^*=0$ or $x_v^*=1$, which implies, from \eqref{Jaco} that $J_{v,w}(\mathbf{x^*})=0$ if $v \neq w$. Thus, the Jacobian is a diagonal matrix. The elements of diagonal are its eigenvalues and are given by 
\begin{align*}
\lambda_v &= J_{v,v}(\mathbf{x^*}) = (1-2y^*_{v})(\sigma_{v,C} k_{v,C}(\mathbf{x^*}) - \sigma_{v,D} k_{v,D}(\mathbf{x^*}))=\\
&= (1-2x^*_{v})\bigg(\sigma_{v,C} \sum_{u=1}^{N} a_{v,u}x^*_u - \sigma_{v,D}\sum_{u=1}^{N} a_{v,u}(1-x^*_u)\bigg)
\end{align*}

By Theorem 2 in \cite{MaMo15}, $\mathbf{x^*}$ is a SNE (NE) if and only if the eigenvalues of the Jacobian matrix $J(\mathbf{x^*})$ are all less (less or equal) than zero.
\begin{align*}
\lambda_v &= J_{v,v}(\mathbf{x^*}) = (1-2x^*_{v})\bigg(\sigma_{v,C}N_{v,C} - \sigma_{v,D}N_{v,D} \bigg)=\\
&= \begin{cases} \sigma_{v,C}N_{v,C} - \sigma_{v,D}N_{v,D}  &\mbox{if } x^*_{v}=0 \\ 
\sigma_{v,D}N_{v,D}- \sigma_{v,C}N_{v,C}  & \mbox{if }  x^*_{v}=1   \end{cases}
\end{align*}

Let $\displaystyle{R_v:=\frac{\sigma_{v,C}}{\sigma_{v,D}}}$, then we can write
\begin{align*}
\sigma_{v,D}\lambda_v = \begin{cases} R_v N_{v,C} - N_{v,D}  &\mbox{if } x^*_{v}=0 \\ 
N_{v,D}- R_vN_{v,C}  & \mbox{if }  x^*_{v}=1,   \end{cases}
\end{align*}
in a short way,
$$ \sigma_{v,D}\lambda_v=(2x_v^*-1)\big(N_{v,D}- R_vN_{v,C}\big).$$
Therefore,
\begin{align*}
\sgn(\lambda_v)&=\sgn(\sigma_{v,D}) \sgn(\sigma_{v,D}\lambda_v)= \nonumber \\ 
&=\sgn(\sigma_{v,D})\sgn(2x_v^*-1)\sgn(N_{v,D}- R_vN_{v,C})
\end{align*}

If $\sgn(\sigma_{v,D})=\sgn(\sigma_{v,C})>0$, then
\begin{align*}
\lambda_v < (\leq) \,\, 0 & \iff
\sgn(2x^*-1)\neq \sgn(N_{v,D}- R_vN_{v,C}) \\
& \iff \begin{cases} N_{v,D} <(\leq)\,\, R_v N_{v,C}, & \mbox{if  } x^*_v=1;    \\ 
 N_{v,C} < (\leq)\,\, \frac{1}{R_v}N_{v,D}, & \mbox{if  } x^*_v=0; \end{cases}
\end{align*} 

In the other hand, if $\sgn(\sigma_{v,D})=\sgn(\sigma_{v,C})<0$, then
\begin{align*}
\lambda_v < (\leq) \,\, 0 & \iff
\sgn(2x^*-1)=\sgn(N_{v,D}- R_vN_{v,C}) \\
& \iff \begin{cases}  N_{v,C} < (\leq) \,\, \frac{1}{R_v}N_{v,D},  & \mbox{if  } x^*_v=1;   \\ 
N_{v,D} < (\leq)\,\, R_v N_{v,C}, & \mbox{if  } x^*_v=0;     \end{cases}
\end{align*} 
\end{proof}


\begin{coro}\label{Coro1} Let $\mathbf{x^*_{FC}}=(1,1,\dots,1)$, $\mathbf{x^*_{FD}}=(0,0,\dots,0)$. If for all vertices $v$, $\sgn(\sigma_{v,C})=\sgn(\sigma_{v,D})>0$, then $\mathbf{x^*_{FC}}$ and $\mathbf{x^*_{FD}}$ are SNE. In the other hand, if $\sgn(\sigma_{v,C})=\sgn(\sigma_{v,D})<0$ $\forall v$, then $\mathbf{x^*_{FC}}$ and $\mathbf{x^*_{FD}}$ are not NE.
\end{coro}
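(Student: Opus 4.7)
The plan is to obtain the corollary as an immediate specialization of Theorem \ref{theom}, since both $\mathbf{x^*_{FC}}$ and $\mathbf{x^*_{FD}}$ lie in $\Theta^p$ and the only data that change from vertex to vertex are the counts $N_{v,C}$ and $N_{v,D}$, which are completely determined by the profile. First I would compute them in the two extreme cases: at $\mathbf{x^*_{FC}}$ every neighbour cooperates, so $N_{v,C}=\sum_{u} a_{v,u}=d_v$ and $N_{v,D}=0$; at $\mathbf{x^*_{FD}}$ the roles are reversed and $N_{v,C}=0$, $N_{v,D}=d_v$. I would also record that in both sign regimes one has $R_v=\sigma_{v,C}/\sigma_{v,D}>0$, and interpret $\sgn(\sigma_{v,C})=\sgn(\sigma_{v,D})>0$ as the coordination-matrix case (because $b_{v,CC}>b_{v,DC}$ and $b_{v,DD}>b_{v,CD}$) and $\sgn(\sigma_{v,C})=\sgn(\sigma_{v,D})<0$ as the anti-coordination case, so that the appropriate branch of Theorem \ref{theom} can be invoked.

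For the SNE claim I would apply the coordination branch. At $\mathbf{x^*_{FC}}$ each vertex plays cooperate, so inequality \eqref{BI_CO} reduces to $0<R_v d_v$; symmetrically at $\mathbf{x^*_{FD}}$ each vertex plays defect and \eqref{BI_DE} reduces to $0<d_v/R_v$. Since $R_v>0$ and $d_v\geq 1$, both inequalities are satisfied at every vertex, which gives the SNE conclusion. For the negative statement I would apply the anti-coordination branch with the weak (NE) inequalities: plugging $N_{v,C}=d_v$, $N_{v,D}=0$ into \eqref{CO_CO_NA} at $\mathbf{x^*_{FC}}$ yields $d_v\leq 0$, which fails at any vertex of positive degree, so the NE characterization is violated and $\mathbf{x^*_{FC}}\notin\Theta^{NE}$; the analogous substitution in \eqref{CO_DE_NA} at $\mathbf{x^*_{FD}}$ again produces $d_v\leq 0$, so $\mathbf{x^*_{FD}}\notin\Theta^{NE}$ either.

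There is no real obstacle since the argument is a direct translation through Theorem \ref{theom}. The one subtle point worth flagging is the implicit hypothesis that the graph has no isolated vertices: if some $d_v=0$ then $N_{v,C}=N_{v,D}=0$, so the strict inequalities needed for SNE fail and the weak ones for NE hold vacuously, which would invert both conclusions at that vertex. I would therefore state explicitly that the network is assumed to have minimum degree at least one, a convention already used throughout the paper.
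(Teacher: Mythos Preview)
Your proof is correct and follows essentially the same route as the paper: both arguments specialize Theorem~\ref{theom} by observing that at $\mathbf{x^*_{FC}}$ one has $N_{v,D}=0$ (and at $\mathbf{x^*_{FD}}$ one has $N_{v,C}=0$), so the relevant (in)equalities \eqref{BI_CO}, \eqref{BI_DE}, \eqref{CO_CO_NA}, \eqref{CO_DE_NA} become trivial. Your explicit remark that the conclusion requires $d_v\geq 1$ (no isolated vertices) is a valid caveat that the paper's proof leaves implicit.
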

\begin{proof}
Since  all players cooperate in the steady state $\mathbf{x^*_{FC}}$, no player $v$ has a neighbour that defects, i.e. $N_{v,D}=0 \,\, \forall \, v$. Therefore, \eqref{BI_CO} is always satisfied and \eqref{CO_CO_NA} is never satisfied. Analogous,  $N_{v,C}=0 \, \forall \, v$ in the steady state $\mathbf{x^*_{FD}}$, thus  condition~\eqref{BI_DE} is always satisfied and \eqref{CO_DE_NA} is never satisfied.  Thus,  $\mathbf{x^*_{FC}}$ and $\mathbf{x^*_{FD}}$ are SNE in coordination games and it is not a NE in anti-coordination games.
\end{proof}

\begin{coro}\label{coro2}
Let $\mathbf{x^*} \in \Theta^p$, $\mathbf{x^*}$ is a steady state of the EGN given by \eqref{eq:prin_red}. If there is a vertex $v$, with  $\sgn(\sigma_{v,C})=\sgn(\sigma_{v,D})>0$, such that all his neighbours plays a strategy different than him, then $\mathbf{x^*}$ is not a NE.
\end{coro}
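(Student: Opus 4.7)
The plan is to invoke Theorem~\ref{theom} at the special vertex $v$ and show that the relevant coordination-game inequality must fail there, which alone suffices to knock $\mathbf{x^*}$ out of $\Theta^{NE}$. The central observation is that the hypothesis ``every neighbour of $v$ plays a strategy different from $v$'' collapses exactly one of the counts $N_{v,C}, N_{v,D}$ to zero while driving the other up to $d_v$; for the statement to have content one uses $d_v \geq 1$, since otherwise $v$ has no neighbours and the hypothesis is vacuous.

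I would split on the two possible values of $x^*_v$. If $x^*_v = 1$, then by assumption every neighbour of $v$ defects, so $N_{v,C}=0$ and $N_{v,D}=d_v \geq 1$; since $v$ has a coordination payoff matrix with $R_v > 0$, the NE inequality \eqref{BI_CO_NA} reads $d_v \leq R_v \cdot 0 = 0$, which is impossible. If instead $x^*_v = 0$, every neighbour cooperates, so $N_{v,C}=d_v \geq 1$ and $N_{v,D}=0$; then \eqref{BI_DE_NA} becomes $d_v \leq R_v^{-1} \cdot 0 = 0$, again impossible. In either case the local condition of Theorem~\ref{theom} is violated at $v$, and since that theorem requires the inequality to hold at \emph{every} vertex, the profile $\mathbf{x^*}$ cannot be a Nash equilibrium.

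There is essentially no obstacle: the argument is a two-line case analysis riding on Theorem~\ref{theom}, exploiting the fact that the coordination inequalities force $N_{v,D}$ to be dominated by $N_{v,C}$ (up to the factor $R_v$) when $v$ cooperates, and vice versa when $v$ defects, so any vertex whose neighbourhood is ``flipped'' relative to $v$ immediately breaks the condition.
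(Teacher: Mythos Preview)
Your proposal is correct and follows essentially the same approach as the paper: a two-case split on $x^*_v$, compute $N_{v,C}$ and $N_{v,D}$ under the ``all neighbours disagree'' hypothesis, and observe that the coordination NE inequality from Theorem~\ref{theom} fails at $v$. If anything, your version is slightly more careful in explicitly invoking $d_v\geq 1$ and in consistently citing the NE (non-strict) inequalities \eqref{BI_CO_NA}, \eqref{BI_DE_NA}.
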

\begin{proof}
Let $\mathbf{x^*}\in \Theta^p$. Assume $x^*_k=1$ for some $k \in \{1,2,...,N\}$ and $x^*_l=0$ for all $l \neq k$, then $N_{k,C}=0, \, N_{k,D}=d_k$ and \eqref{BI_CO_NA} is not satisfied. Analogous, if $x^*_k=0$ for some $k \in \{1,2,...,N\}$ and $x^*_l=1$ for all $l \neq k$, then $N_{k,D}=0, \, N_{k,C}=d_k$ and condition~\eqref{BI_DE} is not satisfied.
\end{proof}

An example of the corollary above application is to graphs with leaves (one vertex link only one other in the graph). If the leaf has a coordination payoff matrix, it must play the same strategy to its neighbour in a pure strategy profile. For instance, consider a star graph, where all players have a coordination payoff matrix, then a $\mathbf{x^*_{FC}}$ and $\mathbf{x^*_{FD}}$ are the only pure NE of the game.

\begin{coro}\label{Coro3}
Let $\mathbf{x^*} \in \Theta^p$, $\mathbf{x^*}$ is a steady state of the EGN. If there is a vertex $v$, such that $\sigma_{v,C}$ and $ \sigma_{v,D}$ are negatives and all his neighbours play the same strategy than him, then $\mathbf{x^*}$ is not a Nash equilibria.
\end{coro}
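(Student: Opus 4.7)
The plan is to apply Theorem \ref{theom} directly, specializing to the anti-coordination case (since $\sigma_{v,C},\sigma_{v,D}<0$ falls under the second bullet of that theorem, as the proof of Theorem \ref{theom} handled $\operatorname{sgn}(\sigma_{v,D})=\operatorname{sgn}(\sigma_{v,C})<0$). The only work is to read off what the hypothesis ``all neighbours of $v$ play the same strategy as $v$'' does to the quantities $N_{v,C}$ and $N_{v,D}$, and then check that the relevant inequality from Theorem \ref{theom} is violated. A tacit assumption I will make explicit is $d_v\ge 1$; otherwise $v$ is isolated and the ``all neighbours'' quantifier is vacuous, leaving no local condition to violate.

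First I would split on the value of $x_v^*\in\{0,1\}$. If $x_v^*=1$, the hypothesis forces every neighbour $u$ to satisfy $x_u^*=1$, so $N_{v,C}=d_v$ and $N_{v,D}=0$. The anti-coordination NE condition is \eqref{CO_CO_NA}, namely $N_{v,C}\le \tfrac{1}{R_v}N_{v,D}$. Since $R_v=\sigma_{v,C}/\sigma_{v,D}>0$ (ratio of two negatives), the right-hand side is $0$, and the inequality becomes $d_v\le 0$, which fails because $d_v\ge 1$. Symmetrically, if $x_v^*=0$ all neighbours defect, giving $N_{v,C}=0$ and $N_{v,D}=d_v$; condition \eqref{CO_DE_NA} reduces to $d_v\le R_v\cdot 0=0$, again impossible.

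In both cases the NE inequality at vertex $v$ provided by Theorem \ref{theom} is violated, hence by the ``only if'' direction $\mathbf{x^*}$ cannot be a Nash equilibrium. There is no real obstacle: the argument is a two-line case analysis once the hypothesis is translated into $N_{v,C},N_{v,D}$ and the sign of $R_v$ is noted. The only thing worth flagging in the write-up is the degree assumption $d_v\ge 1$, and the observation that positivity of $R_v$ (rather than any further assumption on the magnitudes of $\sigma_{v,C},\sigma_{v,D}$) is what collapses the right-hand sides to zero and forces the contradiction.
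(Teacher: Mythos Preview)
Your proof is correct and follows essentially the same approach as the paper: both split on $x_v^*\in\{0,1\}$, translate the hypothesis into $N_{v,C}$ and $N_{v,D}$, and observe that the relevant anti-coordination NE inequality \eqref{CO_CO_NA} or \eqref{CO_DE_NA} from Theorem~\ref{theom} is violated. Your write-up is in fact a bit more careful, making explicit both the positivity of $R_v$ and the tacit assumption $d_v\ge 1$ that the paper's proof leaves implicit.
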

\begin{proof}
Let $\sgn(\sigma_{v,C})=\sgn(\sigma_{v,D})<0$, $\forall v$ and $\mathbf{x^*} \in Theta^p$. Assume for some $k \in \{1,2,...,N\}$ all neighbours of $k$ plays the same strategy than him. If  $x^*_k=0$, then $N_{k,D}=d_k$ and $N_{k,C}=0$, which implies that \eqref{CO_DE_NA} is not satisfied.  Analogous, if $y^*_k=1$, but in this case \eqref{CO_CO_NA} will be violated.
\end{proof}

For the next two corollaries we will assume that all players have the same payoff matrix and they are either coordination or anti-coordination type, i.e., $B_v=B$ for all $v \in V$ and $\sgn(\sigma_{C})=\sgn(\sigma_{D})\neq 0$, where $\sigma_{C}=b_{11}-b_{12}$ and $\sigma_{D}=b_{22}-b_{21}$. Thus, $R_v=\frac{\sigma_{C}}{\sigma_{D}}=R$ for all $v$. A similar assumption is made in \cite{MaMo15} to show that the stability mixed equilibrium of the EGN depends only on the adjacency matrix of the graph and also in \cite{MaMo21} to obtain conditions on self-regulation parameters in order to be able to determine if the mixed equilibrium, $\mathbf{x^*_{fc}}$ and  $\mathbf{x^*_{fd}}$ are NE for Stag-Hunt Games and Chicken Games. The network structure may have an important role in the stability of many other pure equilibria of the EGN. A game with $N$ players has $2^N$ equilibria in $\Theta^p$. Their analysis is important to know if subpopulations of cooperators can emerge (and be stable) in the dynamics of the evolution game.


\begin{coro}\label{Coro4}
Suppose $B_v=B$ for all $v \in V$ in a coordination game. If $R_v=R>d_v-1$ for all $v \in V$,  then $\mathbf{x^*_{FC}}$ and $\mathbf{x^*_{FD}}$ are the unique SNE in $\Theta^p$.
\end{coro}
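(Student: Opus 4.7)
The plan is to combine Corollary~\ref{Coro1} with the SNE inequalities in Theorem~\ref{theom}, evaluated at the right vertex along a ``boundary'' edge between the two strategies. Because the common payoff matrix $B$ defines a coordination game, $\sgn(\sigma_{C})=\sgn(\sigma_{D})>0$, and Corollary~\ref{Coro1} already yields $\mathbf{x^*_{FC}},\mathbf{x^*_{FD}}\in\Theta^{SNE}$. Only uniqueness remains to be shown.

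Assuming (as is standard for EGN) that $G$ is connected, suppose for contradiction that there exists a SNE $\mathbf{x^*}\in\Theta^p$ distinct from $\mathbf{x^*_{FC}}$ and $\mathbf{x^*_{FD}}$. Then both $C=\{v:x^*_v=1\}$ and $D=\{v:x^*_v=0\}$ are nonempty, and connectedness of $G$ guarantees an edge $\{v,w\}$ with $v\in C$ and $w\in D$. I would then focus on the defector endpoint $w$ and apply condition~\eqref{BI_DE}.

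Since the cooperator $v$ is a neighbour of $w$, we have $N_{w,C}\geq 1$ and therefore $N_{w,D}=d_w-N_{w,C}\leq d_w-1$. Condition~\eqref{BI_DE} for SNE reads $N_{w,C}<\tfrac{1}{R}N_{w,D}$, i.e.\ $R\,N_{w,C}<N_{w,D}$. Combining this with $N_{w,C}\geq 1$ and the hypothesis $R>d_w-1$ gives
\[
N_{w,D}\;>\;R\,N_{w,C}\;\geq\;R\;>\;d_w-1,
\]
so the integer $N_{w,D}$ must satisfy $N_{w,D}\geq d_w$, contradicting $N_{w,D}\leq d_w-1$. Hence no such third SNE exists.

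The main subtlety, rather than a true obstacle, is that the argument is genuinely asymmetric in $C$ and $D$: the analogous inequality \eqref{BI_CO} at the cooperator $v$ only produces $N_{v,C}>N_{v,D}/R$, which is weakened (not sharpened) by large $R$ and is therefore insufficient on its own. It is precisely the ``opposite-side'' defector condition, where $R$ multiplies a neighbour count that is forced to be at least $1$, that exploits the hypothesis $R>d_v-1$ and delivers the contradiction. I would also flag the implicit connectedness assumption, since on a disconnected $G$ one could set $\mathbf{x^*}$ to $\mathbf{x^*_{FC}}$ on some components and $\mathbf{x^*_{FD}}$ on others and still satisfy every condition of Theorem~\ref{theom} componentwise.
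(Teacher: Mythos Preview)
Your proof is correct and follows essentially the same route as the paper: both arguments use connectedness to locate a defecting vertex $w$ adjacent to at least one cooperator, so that $N_{w,C}\geq 1$, and then combine $R>d_w-1$ with $N_{w,D}\leq d_w-1$ to violate the SNE condition~\eqref{BI_DE}. Your version is a bit more explicit (invoking Corollary~\ref{Coro1} for existence and flagging the connectedness hypothesis), but the core idea is identical.
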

\begin{proof}
Let $\mathbf{x^*} \in \Theta^p$ such that $\mathbf{x^*}\neq \mathbf{x^*_{FC}}$ and $\mathbf{x^*}\neq \mathbf{x^*_{FD}}$. Then $\mathbf{x^*}$ has at least one player $k$ that defects and plays against a cooperator. Note that $\mathbf{x^*}$ has at least one player that plays $1$ and one that plays $0$, otherwise $\mathbf{x^*}$ would be $\mathbf{x^*_{FC}}$ or $\mathbf{x^*_{FD}}$. Moreover, at least one player $k$ that defects and plays against a cooperator, if that does not happen, then we would have a disconnected graph with all cooperators on one side and all defectors on the other. Since the graph is connected, this does not happen. Then $N_{k,C} \geq 1$ and
\begin{align*}
R > d_k-1&=N_{k,C}+N_{k,D}-1 \geq N_{k,D} \iff \\
&\iff R N_{k,C} > N_{k,D}.
\end{align*}

Thus, condition \eqref{BI_DE} is not satisfied.
\end{proof}
\subsection{Independent Dominating Sets}
A set $D \subset V$ is a dominating set of a graph $G$ with vertices $V$ if every vertex in $\in V-D$ is adjacent to a vertex in $D$.  A set  $D \subset V$ is independent if no two vertices in $D$ are adjacent. An independent dominating set of a graph $G$ is a set of vertices that are dominating and independent in $G$, \cite{Berge1961},\cite{GODDARD2013839}, \cite{Ore1961}.
The following corollary gives conditions on $R_v$'s such that a pure steady-state $\mathbf{x}^*$ of the EGN is a SNE in an anti-coordination if and only if the set of players that cooperates or the set of players that defects are an independent dominating set of a graph $G$.
\begin{coro}\label{Coro5}
Consider an EGN over a graph $G$ with all players having a payoff matrix anti-coordination $B_v$. Let $\mathbf{x^*} \in \Theta^p$, $\mathcal{D}:=\{v \, : x^*_v=0 \}$ and $\mathcal{C}:=\{v \, : x^*_v=1 \}.$ Then
\begin{itemize}
\item[i)] if $R_v >d_v-1$ for all $v \in V$, then $\mathbf{x^*}$ is a SNE iff $\mathcal{C}$ is an independent dominant subgraph of $G$.
\item[ii)] if $R_v^{-1} > d_v-1$ for all $v \in V$, then $\mathbf{x^*}$ is SNE iff $\mathcal{D}$ is an independent dominant subgraph of $G$.
\end{itemize}
\end{coro}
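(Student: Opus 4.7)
The plan is to apply Theorem \ref{theom} in the anti-coordination case and exploit the fact that the hypothesis $R_v>d_v-1$ (resp.\ $R_v^{-1}>d_v-1$) is precisely the threshold that collapses each vertex-level SNE inequality into a binary statement about whether $v$ has any neighbor in the opposite strategy class.

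For part (i), I begin with conditions \eqref{CO_CO} and \eqref{CO_DE} of Theorem \ref{theom}: SNE requires $R_v N_{v,C}<N_{v,D}$ for every $v\in\mathcal{C}$ and $N_{v,D}<R_v N_{v,C}$ for every $v\in\mathcal{D}$. The key pigeonhole-style inequality is that whenever $N_{v,C}\geq 1$, the identity $N_{v,C}+N_{v,D}=d_v$ gives
\begin{equation*}
R_v N_{v,C}\geq R_v>d_v-1\geq N_{v,D}.
\end{equation*}
Applied at $v\in\mathcal{C}$, this forces $N_{v,C}=0$ (otherwise \eqref{CO_CO} fails), hence $\mathcal{C}$ is independent; applied with $C$ and $D$ swapped at $v\in\mathcal{D}$, it shows that \eqref{CO_DE} is equivalent to $N_{v,C}\geq 1$, so every vertex of $V\setminus\mathcal{C}$ has a neighbor in $\mathcal{C}$, i.e., $\mathcal{C}$ is dominating. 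For the converse, independence gives $N_{v,C}=0$ on $\mathcal{C}$, so \eqref{CO_CO} reads $0<N_{v,D}/R_v$, which holds as long as $N_{v,D}\geq 1$; domination gives $N_{v,C}\geq 1$ on $\mathcal{D}$, and the pigeonhole inequality then yields \eqref{CO_DE}.

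Part (ii) is the mirror argument: multiplying \eqref{CO_CO} and \eqref{CO_DE} through by $R_v^{-1}$ reveals that $R_v^{-1}>d_v-1$ plays the same structural role for $N_{v,D}$ that $R_v>d_v-1$ played for $N_{v,C}$ in part (i), and the same reasoning with the roles of $\mathcal{C}$ and $\mathcal{D}$ interchanged identifies $\mathcal{D}$ as the independent dominating set. The main delicate point I foresee is the boundary case of an isolated vertex inside $\mathcal{C}$ (resp.\ $\mathcal{D}$): the strict SNE inequality there degenerates to $0<0$, while the independent-dominating property is formally still satisfied; this is harmless under the mild assumption that $G$ has no isolated vertices (in particular whenever $G$ is connected and nontrivial), and I would dispatch it with a brief remark rather than a separate subcase.
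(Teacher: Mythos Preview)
Your proposal is correct and follows essentially the same route as the paper: both arguments apply Theorem~\ref{theom} directly, exploit the identity $N_{v,C}+N_{v,D}=d_v$ together with $R_v>d_v-1$ to force $N_{v,C}=0$ on $\mathcal{C}$ (independence) and $N_{v,C}\geq 1$ on $\mathcal{D}$ (domination), and run the converse by checking \eqref{CO_CO} and \eqref{CO_DE} from those two facts. The only cosmetic difference is that the paper routes some of these steps through Corollaries~\ref{Coro1} and~\ref{Coro3} rather than arguing directly from the Theorem~\ref{theom} inequalities as you do; your handling of the isolated-vertex edge case is also more explicit than the paper's.
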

\begin{proof}
We will start showing \emph{i)}. Assume $R_v>d_v-1$ for all $v \in V$ and $\mathbf{x^*} \in \Theta^p$ is a SNE. First note that, anti-coordination implies by Corollary~\ref{Coro1} that $\mathcal{C} \neq \emptyset$. Now, by Corollary~\ref{Coro4}, a player that defects has at least one neighbour that cooperates, so  every vertex in $\in V-\mathcal{C}$ is adjacent to a vertex in $\mathcal{C}$, thus, $\mathcal{C}$ is dominant. In order to show that $\mathcal{D}$ is independent we shall prove that for every $v \in \mathcal{C}$, $N_{v,C}=0$, i.e., no neighbour of $v$ cooperates. Take $v \in \mathcal{C}$, if $d_v=1$ then Corollary~\ref{Coro3} implies that $N_{V,D}\geq1$ and therefore $N_{v,D}=1$ and $N_{v,C}=0$. If $d_v>1$, then $R_v>0$, since $R_v>d_v-1$. Thus $R_v^{-1}<(d_v-1)^{-1}$. Since $\mathbf{x^*} \in \Theta^p$ is a SNE, then equation \eqref{CO_CO} implies that $N_{v,C} < R_v^{-1}N_{v,D}$, thus
\begin{align*}
N_{v,C} < R_v^{-1}N_{v,D}&<\frac{N_{v,D}}{d_v-1} \Rightarrow N_{v,C}(d_v-1) < N_{v,D}\Rightarrow\\
&\Rightarrow d_v N_{v,C} < N_{v,D}+N_{v,D}=d_v\Rightarrow\\
&\Rightarrow d_v N_{v,C} < d_v\Rightarrow  N_{v,C} < 1 \Rightarrow  \\
&\Rightarrow  N_{v,C} =0.
\end{align*}
Therefore $\mathcal{C}$ is independent dominant subgraph of $\mathcal{G}$. 

In order to show the the reverse implication assume $\mathcal{C}$ is an independent dominant set of $G$, then for all $v \in \mathcal{C}$, $N_{v,C}=0$. Also, again Corollary~\ref{Coro3} implies that $N_{V,D}\geq1$, therefore condition \eqref{CO_CO} is easily checked for $v \in \mathcal{C}$ ($x_v=1$). Now, if $x^*_v=0$, then Corollary~\ref{Coro3} implies that $N_{V,C}\geq1$ which implies $N_{v,D}<d_v-1$. Thus, $R_v N_{v,C} > d_v-1 > N_{v,D}$ which is the condition \eqref{CO_DE}. Both conditions together ensures that $\mathbf{x^*}$ is a SNE.
The proof of \emph{ii)} is analogous.
\end{proof}

\emph{Remark:} In the case that $R_v=R$ for all $v \in V$, if $R>\max_{v \in V} (d_v-1)$, then $\mathbf{x^*}$ is a SNE  iff $\mathcal{C}$ is an independent dominant set of $G$. Analogous, if $R^{-1}<\max_{v \in V} (d_v-1)$, then $\mathbf{x^*}$ is an SNE iff $\mathcal{D}$ is an independent dominant set of $G$.

\section{Application} \label{sec:app}

In this section, we will provide some examples to show how helpful can be the Theorem~\ref{theom} and the corollaries of the previous section. As argued before, the graph's topology can be an essential instrument to determine the SNE (NE) in $\Theta^p$.

\subsection{Coordination Game and Caterpillar graph}
Consider a EGN over a Caterpillar graph $ G:=C_{8}(0,1,0,5,0,0,4,0)$. Each player in the main stalk (players $1$ to $8$) has payoff matrix $B_{cs}$ and each player in branches (players $9$ to $18$) has payoff matrix $B_b$, where 

$$B_{cs}=\begin{bmatrix}
2.1 & 0 \\
0 & 1
\end{bmatrix} \quad \text{and} \quad B_{b}=\begin{bmatrix}
3 & 0 \\
0 & 2
\end{bmatrix}.$$
\begin{figure}[!ht]
\centering
\includegraphics[width=7cm]{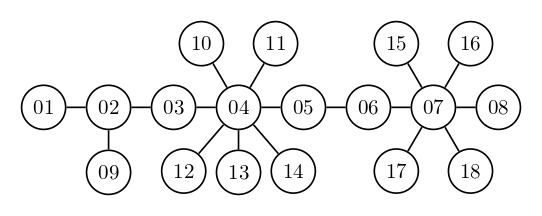}
\caption{Caterpillar graph $C_8(0,1,0,5,0,0,4,0)$}
\label{fig:cater}
\end{figure}

Let $R_{cs}:=R_v$ for $v \in\{1,2,...,8\}$ and $R_{b}:=R_v$ for  $v \in\{9,10,...,18\}$. Then $R_{cs}=2.1$ and $R_b=1.5$. Note that $\Theta^p$ contains $2^{18}=262,144$ pure equilibria. We will use the theory developed to show that most of them can not be SNE. Assume all players play a pure strategy, Corollary~\ref{coro2} guarantees that all leaves will play the same strategy that their neighbour in the main stalk. Therefore, players $15,16,17,18$ play in agreement with $7$, so does player $8$ which can be viewed as a leaf of $7$. Analogous, players $10,11,12,13,14$ will agree with player $4$. The same corollary will also state that in a SNE, players $1,2,9$ plays the same strategy. 

Player $3$ always agrees with player $2$. Let us verify that affirmation. Note that it has only two neighbours: $2$ and $4$. If $4$ defects then:
\begin{itemize}
\item $3$ cooperates if $2$ cooperates (since he will benefit more cooperating with $2$;
\item $3$ defects if $2$ defects since both its neighbours defects by Corollary~\ref{coro2}.
\end{itemize} 
If $4$ cooperates, then $3$ will cooperate, which implies that $2$ has to cooperate by Theorem~\ref{theom}. In any way that $4$ plays, $2$ and $3$ will always agree.
Last, players $5$ and $6$ will also play the same strategy. Suppose they could disagree, then one of them would play cooperate. The other one would have only two neighbours where at least one of them cooperates. Since the payoff for cooperating is better than defecting, he would be better at cooperating in agreement with his neighbour. Therefore we can group the players in 4 sets: $G_1:=\{1,2,3,9\}$, $G_2:=\{4,10,11,12,13,14\}$, $G_1:=\{5,6\}$, $G_1:=\{7,8,15,16,17,18\}$. 
\begin{figure}[!ht]
\centering
\includegraphics[width=7cm]{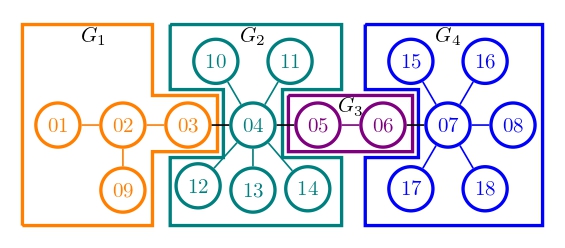}
\caption{Representation of the 4 sets which among them each player should play the same strategy.}
\end{figure}

An $x^* \in \Theta^p$ only is a SNE if the players in these groups play the same strategy. Thus, from $262,144$ equilibria in$\Theta^*$, only $2^4=16$ can be SNE. We use Theorem~\ref{theom} to verify that among these 16, 8 are, in fact, SNE. We present all of them in the figure below, where the players (groups) in yellow cooperates and those in red defects.

\begin{figure}[!ht]
\centering
\includegraphics[width=5.5cm]{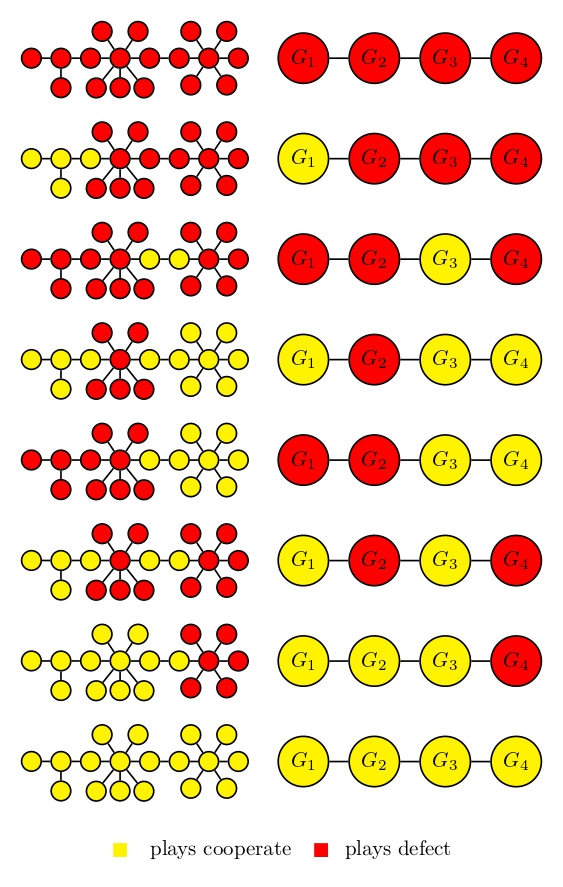}
\caption{Representation of 8 SNE of the problem. In the left column, it is the representation in the original graph; in the right column it is presented the reduced graph with the groups where all players inside it play in agreement.}
\end{figure}
\clearpage
\subsection{Anti-Coordination Game in an  Erd\"os-Rényi network}
Consider the game  over a Erd\"os-Rényi graph generated with $8$ vertices and average degree $4$ where all players have the same payoff matrix $B$, let $R=\sigma_{v,C}/\sigma_{v,D}$, for any player $v$. 
\begin{figure}[!ht]
\centering
\includegraphics[width=3cm]{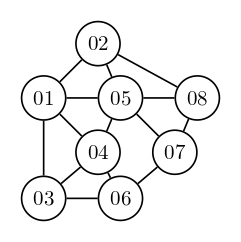}
\caption{Erd\"os-Rényi graph with $8$ vertices and average degree $4$.}
\label{fig:erdos}
\end{figure}

In the anti-coordination case, $\sgn(\sigma_C)=\sgn(\sigma_D)<0$, Corollary~\ref{Coro3} guarantees that if $\mathbf{x^*}$ is a SNE then every player has at least one neighbour that plays a different strategy different than him. Moreover, note that  player 5 is more connected to others than any other player, $d_5=5$. If $R>4$, then by Corollary~\ref{Coro5}, the SNE of this game are strategies profiles that the cooperators form an independent dominant set of the graph. There are $9$ of them, represented in the below.

\begin{figure}[!ht]
\centering
\includegraphics[width=6cm]{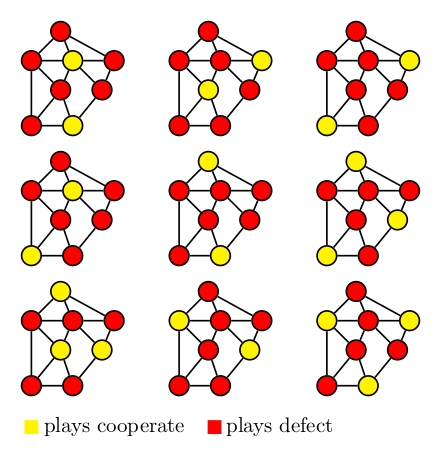}
\caption{The cooperator form an independent dominant graph for SNE for the EGN over the graph of Figure~\ref{fig:erdos} for $R>4$. }
\label{fig:id_graph}
\end{figure}

Suppose $R \in (0, \frac14)$, defectors would be an independent dominant set of the graph in Figure~\ref{fig:erdos}. Moreover, there would be $9$ solutions which would be the same as in Figure~\ref{fig:id_graph} where all cooperator becomes a defector and vice versa. For $R=2$, there are only 2 SNE among the 256 pure steady states.

\begin{figure}[!ht]
\centering
\includegraphics[width=6cm]{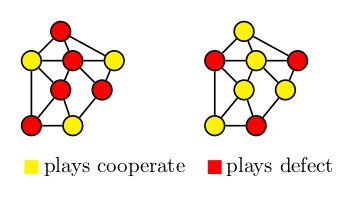}
\caption{Strict Nash equilibria for the EGN over the graph of Figure~\ref{fig:erdos} for $R=2$. }
\end{figure}

For $R$ ranging in $(0, \infty)$, it possible to check the conditions of Theorem~\ref{theom} for each one of the $256$ pure steady states. We only need to note that any change of states must occur in a ratio among degrees of a vertex, i.e., for a point in the set $\{\frac{d_v}{d_u} \,: \, v, u \in V \}$. If we order this set as $Ch:=\{0,\alpha_1, \alpha_2,..., \alpha_l\}$, then it is enough to test for $R \in (0,\alpha_1)$, any $R \in (\alpha_i, \alpha_{i+1})$ for $i=1,2,..l-1$ and $R=\alpha_i$ for $i=1,2,..l$. In our example, $Ch=\{0,\frac14,\frac13,\frac12,\frac23,1,\frac32,2,3,4\}$.  The diagram below shows the number of SNE for all values of $R \in (0, \infty)$ of the EGN over the graph of Figure~\ref{fig:erdos}. For clarity, the $R$-axis is not on scale.

\begin{figure}[!ht]
\centering
\includegraphics[width=9cm]{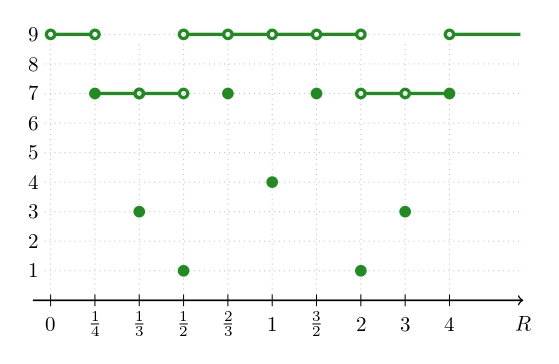}
\caption{Number of Pure Nash equilibria for $R>0$ of the EGN over the graph of Figure~\ref{fig:erdos}. }
\end{figure}

\clearpage
\bibliographystyle{ieeetr}
\bibliography{egn_bib}

\end{document}